\title{\Huge\bf$\,$\\[-4.00ex]
{Lower Bound on the Redundancy of~PIR~Codes}\\[1.00ex]}
\author{\large Sankeerth Rao and Alexander Vardy}
\date{}
\newtheorem{theorem}{Theorem}
\newtheorem{lemma}[theorem]{Lemma}
\newtheorem{defn}{Definition$\!$}
\newenvironment{definition}{\begin{defn}}{\end{defn}}
\newcommand{\Tref}[1]{The\-o\-rem\,\ref{#1}}
\newcommand{\Lref}[1]{Lem\-ma\,\ref{#1}}
\newcommand{\Cref}[1]{Co\-ro\-lla\-ry\,\ref{#1}}
\newcommand{\Dref}[1]{De\-fin\-i\-tion\,\ref{#1}}
\newcommand{\dfn}{\bfseries\itshape}
\newcommand{\be}[1]{\begin{equation}\label{#1}}
\newcommand{\ee}{\end{equation}} 
\newcommand{\eq}[1]{(\ref{#1})}
\renewcommand{\Bbb}{\mathbb}
\newcommand{\C}{{\Bbb C}}
\newcommand{\F}{{\Bbb F}}
\newcommand{\Span}[1]{{\left\langle {#1} \right\rangle}}
\newcommand{\deff}{\mbox{$\stackrel{\rm def}{=}$}}
\newcommand{\zero}{{\mathbf 0}}
\newcommand{\al}{\alpha}
\newcommand{\shalf}{\mbox{\raisebox{.8mm}{\footnotesize $\scriptstyle 1$}
\footnotesize$\!\!\! / \!\!\!$ \raisebox{-.8mm}{\footnotesize
$\scriptstyle 2$}}}
\newcommand{\sfourth}{\mbox{\raisebox{.8mm}{\footnotesize $\scriptstyle 1$}
\footnotesize$\!\!\! / \!\!\!$ \raisebox{-.8mm}{\footnotesize
$\scriptstyle 4$}}}
\renewcommand{\le}{\leqslant}
\renewcommand{\ge}{\geqslant}
\DeclareMathAlphabet{\mathbfsl}{OT1}{ppl}{b}{it} %{OT1}{cmr}{bx}{it}
\newcommand{\aaa}{\mathbfsl{a}} 
\newcommand{\bbb}{\mathbfsl{b}} 
\newcommand{\eee}{\mathbfsl{e}} 
\newcommand{\uuu}{\mathbfsl{u}} 
\newcommand{\vvv}{\mathbfsl{v}}
\newcommand{\xxx}{\mathbfsl{x}}
\newcommand{\cA}{{\cal A}}
\newcommand{\cP}{{\cal P}}
\gdef\@punct{.\ \ }  % Punctuation after run-in section heading
\def\@sect#1#2#3#4#5#6[#7]#8{%
  \ifnum #2>\c@secnumdepth
     \def\@svsec{}
  \else
     \refstepcounter{#1}\edef\@svsec{%
     \ifnum #2>0{{\csname the#1\endcsname}}.\fi%
    \hskip .5em}
  \fi
  \@tempskipa #5\relax
  \ifdim \@tempskipa>\z@
     \begingroup #6\relax
       \@hangfrom{\hskip #3\relax\@svsec}{\interlinepenalty \@M #8\par}
     \endgroup
     \csname #1mark\endcsname{#7}
     \addcontentsline{toc}{#1}{\ifnum #2>\c@secnumdepth\else
          \protect\numberline{\csname the#1\endcsname}\fi#7}
  \else
     \def\@svsechd{#6\hskip #3\@svsec #8\@punct\csname #1mark\endcsname{#7}
     \addcontentsline{toc}{#1}{\ifnum #2>\c@secnumdepth \else
          \protect\numberline{\csname the#1\endcsname}\fi#7}}
  \fi
  \@xsect{#5}}
\def\@ssect#1#2#3#4#5{\@tempskipa #3\relax
  \ifdim \@tempskipa>\z@
    \begingroup #4\@hangfrom{\hskip #1}{\interlinepenalty \@M #5\par}\endgroup
  \else \def\@svsechd{#4\hskip #1\relax #5\@punct}\fi
  \@xsect{#3}}
\begin{document}

\maketitle

\thispagestyle{empty}

\begin{abstract}
\noindent
We prove that the redundancy of a $k$-server PIR code of dimension $s$
is $\Omega(\sqrt{s})$ for all $k \ge 3$. This coincides with a known
upper bound of $O(\sqrt{s})$ on the redundancy of PIR codes. Moreover, 
for $k=3$ and $k = 4$, we determine the lowest possible redundancy of 
$k$-server PIR codes exactly. Similar results were proved independently by 
Mary Wootters %R~\cite{Wootters}
using a different~method. 
%R The authors are currently working on a joint paper
%R with a more comprehensive exposition of these results.
\end{abstract}

%R \vspace*{6.00ex}

\vspace{3.00ex}
%==============================================================================%
%                                                                              %
%   1. INTRODUCTION                                                            %
%                                                                              %
%==============================================================================%
%\section{Introduction}

\noindent
Given two binary vectors $\uuu = (u_1,u_2,\ldots,u_n)$ and 
$\vvv = (v_1,v_2,\ldots,v_n)$, we define their 
\emph{product $\uuu\vvv$} componentwise, namely\vspace{-0.50ex}
\be{product-def}
\uuu\vvv 
\ \ \deff \,\
(u_1 v_1, u_2 v_2, \ldots, u_n v_n)
\ee
where $u_1v_1, u_2 v_2, \ldots, u_nv_n$ are computed in $\mathrm{GF}(2)$.
%R the $n$ products $u_i v_i$ are in $\GF(2)$. 
Note that the product operation in \eq{product-def} distri\-butes
over addition in $\F_2^n$. Thus \eq{product-def} turns the vector
space $\F_2^n$ into an algebra $\cA_n$ over $\F_2$. %R $\GF(2)$. 
This~algebra $\cA_n$ is unital, associative, and commutative.

Given a set $X \subseteq \F_2^n$, we define the square of $X$
as the set of products of the elements in $X$.~Explicitly, 
%R if $X = \{\xxx_1,\xxx_2,\ldots,\xxx_r\}$, then 
$X^2$ is defined as follows:\vspace{-1.00ex}
\be{square-def}
X^2 \,\ \deff\,\
%R \bigl\{ \xxx_1 \xxx_2 ~:~ \xxx_1,\xxx_2 \in X \bigr\}
\bigl\{\, \uuu \vvv \,:\, \uuu,\vvv \in X ~\text{and}~ \uuu \ne \vvv\bigr\}
\ee
The following lemmas follow straightforwardly from the
definitions in \eq{product-def} and \eq{square-def}, along
with the fact that %R the algebra 
$\cA_n$ is a commutative algebra.
We let $\Span{X}$ denote the linear span over $\F_2$
of a set $X \subseteq \F_2^n$.

\vspace{0.50ex}
\begin{lemma}
\label{L1}
$|X^2| \,\le\, |X|\bigl(|X|-1\bigr)/2$.
\vspace{-0.75ex}
\end{lemma}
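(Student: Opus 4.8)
The plan is to realize $X^2$ as the image of a map defined on the set of unordered pairs of distinct elements of $X$, and then simply count. First I would invoke commutativity of the algebra $\cA_n$: since $\uuu\vvv = \vvv\uuu$ for all $\uuu,\vvv \in X$, the product $\uuu\vvv$ depends only on the unordered pair $\{\uuu,\vvv\}$. Hence there is a well-defined map $\varphi$ from the collection $\binom{X}{2}$ of two-element subsets of $X$ into $\F_2^n$, given by $\{\uuu,\vvv\} \mapsto \uuu\vvv$. By the definition \eq{square-def} of $X^2$ as the set of products $\uuu\vvv$ with $\uuu,\vvv \in X$ and $\uuu \ne \vvv$, the image of $\varphi$ is precisely $X^2$; that is, $\varphi$ is a surjection onto $X^2$.

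It then follows immediately that $|X^2| = \bigl|\varphi\bigl(\binom{X}{2}\bigr)\bigr| \le \bigl|\binom{X}{2}\bigr| = \binom{|X|}{2} = |X|\bigl(|X|-1\bigr)/2$, which is the claimed inequality. There is no real obstacle here; the only point worth noting is that $\varphi$ need not be injective — distinct pairs can share the same product — but this only strengthens the inequality, so it causes no difficulty. The argument uses nothing about $\cA_n$ beyond the commutativity already recorded in the excerpt.
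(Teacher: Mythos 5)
Your proposal is correct and is essentially the paper's own argument, made slightly more formal: the paper likewise observes that by commutativity each product $\uuu\vvv=\vvv\uuu$ is determined by an unordered pair of distinct elements of $X$, giving at most $\binom{|X|}{2}$ products, some of which may coincide. No further comment is needed.
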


\begin{proof}
If $|X| = r$, then $X^2$ consists of the $\binom{r}{2}$ vectors
$\uuu \vvv = \vvv \uuu$ for some $\uuu \ne \vvv$ in $X$.
%, along with the $r$ vectors $\uuu\uuu = \uuu$ for some $\uuu \in X$.
Some of these vectors may coincide.
\end{proof}

\begin{lemma}
\label{L3}
Let $\uuu,\vvv_1,\vvv_2,\vvv_3 \in \F_2^n$. \,If\, 
$\vvv_1\vvv_2 + \vvv_1\vvv_3 + \vvv_2\vvv_3 = \zero$,
then
$$
(\uuu + \vvv_1)(\uuu + \vvv_2)
\,+\,
(\uuu + \vvv_2)(\uuu + \vvv_3)
\,+\,
(\uuu + \vvv_3)(\uuu + \vvv_1)
\ = \
\uuu
$$
\vspace{-4.50ex}
\end{lemma}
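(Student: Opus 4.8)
The plan is to expand the left-hand side directly, exploiting the ring structure of $\cA_n$. First I would record the one fact I need beyond distributivity, commutativity and associativity: since every coordinate of a vector lies in $\mathrm{GF}(2)$ and $x^2 = x$ there, the algebra $\cA_n$ is idempotent, i.e. $\www\www = \www$ for every $\www \in \F_2^n$. I would also note that $\cA_n$ has characteristic $2$, so $\www + \www = \zero$ and coefficients may be reduced modulo~$2$.

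Next, I would expand each of the three products. Using $\uuu\uuu = \uuu$, a typical term becomes
$$
(\uuu+\vvv_i)(\uuu+\vvv_j) \ = \ \uuu + \uuu\vvv_i + \uuu\vvv_j + \vvv_i\vvv_j .
$$
Summing over $(i,j) \in \{(1,2),(2,3),(3,1)\}$, the three copies of $\uuu$ add up to $3\uuu = \uuu$; each term $\uuu\vvv_i$ occurs exactly twice, hence cancels in characteristic~$2$; and the pure-$\vvv$ contribution that remains is $\vvv_1\vvv_2 + \vvv_2\vvv_3 + \vvv_3\vvv_1$.

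Finally, I would invoke the hypothesis $\vvv_1\vvv_2 + \vvv_1\vvv_3 + \vvv_2\vvv_3 = \zero$ to eliminate that last group, leaving exactly $\uuu$, as claimed. I do not expect any genuine obstacle here: the computation is routine once the idempotence $\www\www = \www$ and the characteristic-$2$ cancellations are kept in mind; the only point requiring a little care is not to drop the $\uuu\uuu$ contribution when expanding each product.
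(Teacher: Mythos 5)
Your proposal is correct and is exactly the ``straightforward verification using distributivity and commutativity'' that the paper invokes, carried out in full: expanding each product via the idempotence $\www\www=\www$, cancelling the doubled cross terms $\uuu\vvv_i$ in characteristic $2$, and using the hypothesis to remove $\vvv_1\vvv_2+\vvv_2\vvv_3+\vvv_3\vvv_1$. Nothing is missing; you simply supply the details the paper leaves to the reader.
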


\begin{proof}
Follows by straightforward verification using distributivity 
and commutativity in $\cA_n$.
\end{proof}

\vspace{1.00ex}
We now show how the foregoing lemmas can be used to establish
a bound on the redundancy of binary $k$-server PIR codes for $k \ge 3$.
%R For much more on such codes, see~\cite{FVY15}. 
%R We begin with their definition.
These codes are defined in~\cite{FVY15a,FVY15b} 
as follows. %\vspace*{-3.00ex}

\begin{definition}
\label{PIRmatrix}
Let $\eee_i$ denote the binary (column) vector with\/ $1$ in position $i$
and zeros elsewhere. We say that an $s \times n$ binary matrix $G$ has\/ 
{\dfn property $\!\cP_k$} if for all $i \in \![s]$, there exist $k$ disjoint 
sets of columns of $G$ that add up to $\eee_i$. A matrix that has 
property $\cP_k$ is also said to be a {\dfn $k$-server PIR matrix}.
A binary linear code\/ $\C$ of length $n$ and dimension $s$ is %R said to be 
called 
a {\dfn $k$-server PIR code} if there exists a~generator 
matrix $G$ for\, $\C$ with property $\cP_k$.
\end{definition}

For much more on $k$-server PIR codes and their applications
in reducing the storage overhead of private information retrieval,
see~\cite{FVY15a,FVY15b}. 
In particular, it is shown in~\cite{FVY15b} that, given 
a $k$-server PIR code of length $s+r$ and dimension $s$, 
the storage overhead of \emph{any} linear $k$-server PIR protocol
can be reduced from $k$ to $(s+r)/s$. Moreover, for every fixed $k$,
there exist $k$-server PIR codes whose rate (and, hence, storage
overhead) approaches $1$ as their dimension $s$ grows. However,
exactly \emph{how fast} the resulting storage overhead tends to $1$
as $s \to \infty$ was heretofore unknown. For every fixed $k$, 
Fazeli, Vardy, and Yaakobi~\cite{FVY15a,FVY15b} construct $k$-server
PIR codes with redundancy $r$ bounded by
$r \le k \sqrt{s}\bigl(1 + o(1)\bigr)$.
But the question of whether codes with even smaller redundancy
exist was left open in~\cite{FVY15a,FVY15b}. The following 
theorem shows that the redundancy $O(\sqrt{s})$ 
of the codes~con\-structed in~\cite{FVY15a,FVY15b} is asymptotically optimal.

\begin{theorem}
\label{main}
Let\, $\C$ be a $3$-server PIR code of length $n$ and dimension $s$.
Let $r = n-s$ denote the redundancy of\, $\C$. Then\, 
$r(r\,{-}\,1) \ge 2s$.
\end{theorem}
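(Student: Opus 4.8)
The plan is to convert property $\cP_3$ into a statement inside the algebra $\cA_n$ and then run a dimension count. Identify each index set $A\subseteq[n]$ with its characteristic vector in $\F_2^n$, and view the generator matrix $G$ as the linear map $\F_2^n\to\F_2^s$, $\xxx\mapsto G\xxx$; since $G$ has rank $s$, its kernel $K$ has dimension $r=n-s$. Property $\cP_3$ says exactly that for every $i\in[s]$ there are pairwise disjoint $\aaa_1^{(i)},\aaa_2^{(i)},\aaa_3^{(i)}\in\F_2^n$ with $G\aaa_t^{(i)}=\eee_i$ for $t=1,2,3$, where $\eee_i$ is the $i$-th unit vector of $\F_2^s$.

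First I would produce, for each $i$, a single product of two vectors of $K$ that maps to $\eee_i$. Because $\aaa_1^{(i)},\aaa_2^{(i)},\aaa_3^{(i)}$ have disjoint supports we have $\aaa_t^{(i)}\aaa_{t'}^{(i)}=\zero$ for $t\ne t'$, so the triple meets the hypothesis of \Lref{L3}; applying \Lref{L3} with $\vvv_1,\vvv_2,\vvv_3=\aaa_1^{(i)},\aaa_2^{(i)},\aaa_3^{(i)}$ and $\uuu=\aaa_1^{(i)}$ gives $(\aaa_1^{(i)}+\aaa_2^{(i)})(\aaa_1^{(i)}+\aaa_3^{(i)})=\aaa_1^{(i)}$. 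Put $\www^{(i)}=\aaa_1^{(i)}+\aaa_2^{(i)}$ and $\zzz^{(i)}=\aaa_1^{(i)}+\aaa_3^{(i)}$; both lie in $K$ since $G$ sends each to $\eee_i+\eee_i=\zero$, and $G(\www^{(i)}\zzz^{(i)})=G\aaa_1^{(i)}=\eee_i$.

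Next I would bound the dimension of the space of such products. Fix a basis $\bbb_1,\dots,\bbb_r$ of $K$ and set $W=\Span{\bbb_j\bbb_l:1\le j<l\le r}$. Applying \Lref{L1} to $X=\{\bbb_1,\dots,\bbb_r\}$ gives $|X^2|\le r(r-1)/2$, and since $W=\Span{X^2}$ we obtain $\dim W\le r(r-1)/2$. The key point is that $\www^{(i)}\zzz^{(i)}$ lies in $K+W$: writing $\www^{(i)}=\sum_{j\in S}\bbb_j$ and $\zzz^{(i)}=\sum_{l\in T}\bbb_l$ and expanding the product by distributivity, the diagonal terms contribute $\sum_{j\in S\cap T}\bbb_j\bbb_j=\sum_{j\in S\cap T}\bbb_j\in K$ (using $\bbb_j\bbb_j=\bbb_j$), while the off-diagonal terms lie in $W$. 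Hence $\eee_i=G(\www^{(i)}\zzz^{(i)})\in G(K)+G(W)=G(W)$, because $G$ annihilates the $K$-part.

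Since the $\eee_i$ span $\F_2^s$ and all lie in the subspace $G(W)$, we get $G(W)=\F_2^s$, so $s=\dim G(W)\le\dim W\le r(r-1)/2$, which is the claimed $r(r-1)\ge 2s$. The one genuinely non-routine point — the step I would be most careful about — is the expansion of $\www^{(i)}\zzz^{(i)}$ in the basis: it is essential that the diagonal monomials $\bbb_j\bbb_j$ collapse to the linear (kernel) vectors $\bbb_j$ and therefore vanish under $G$, so that only the $\binom{r}{2}$ truly quadratic monomials $\bbb_j\bbb_l$ with $j\ne l$ survive. Without this observation one only gets $\www^{(i)}\zzz^{(i)}\in\Span{\bbb_j\bbb_l:j\le l}$ and hence the weaker bound $r(r+1)\ge 2s$; the collapse is precisely what sharpens it to $r(r-1)\ge 2s$.
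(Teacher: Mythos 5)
Your proof is correct, and it reaches the sharp bound $r(r-1)\ge 2s$ by a route that is parallel in spirit to the paper's but genuinely different in where the squaring takes place. The paper first passes to a systematic form $G'=AG=[\,I_s\,|\,P\,]$ and squares the set $X$ of the $r$ parity columns of $G'$ inside $\F_2^s$: splitting each repair set into its systematic part $S_t\subseteq[s]$ and parity part $T_t$, and applying \Lref{L3} with $\uuu=\aaa_i$ (the $i$-th column of $A$), it writes $\aaa_i$ as a sum of products $\xxx'_j\xxx'_k$ with $j,k$ in distinct sets $T_t,T_{t'}$; since $T_1,T_2,T_3$ are disjoint, only off-diagonal products arise, so $\aaa_i\in\Span{X^2}$ and $s\le|X^2|\le\binom{r}{2}$. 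You instead work on the kernel (dual-code) side in $\F_2^n$: your specialization $\uuu=\aaa_1^{(i)}$ degenerates \Lref{L3} to the single identity $(\aaa_1^{(i)}+\aaa_2^{(i)})(\aaa_1^{(i)}+\aaa_3^{(i)})=\aaa_1^{(i)}$ (which one could also verify directly from disjointness and idempotence), so each coordinate needs only one product of two vectors of $K=\ker G$, which you then push forward under $G$. The price is exactly the point you flag: expanding in an arbitrary basis of $K$ produces diagonal monomials, and you need $\bbb_j\bbb_j=\bbb_j$ to absorb them into $K$, where $G$ annihilates them; the paper's bookkeeping with disjoint $T_1,T_2,T_3$ avoids diagonal terms altogether and hence never invokes idempotence. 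What your version buys is that it dispenses with the systematic form and the change-of-basis matrix $A$ entirely -- the statement and the dimension count live on $G$ and an arbitrary basis of its kernel -- and it uses one product per coordinate rather than three; what the paper's version buys is that every product it forms lies literally in $X^2$, so the count $s\le\dim\Span{X^2}\le r(r-1)/2$ is immediate. Both arguments bound the same quantity, use the same two lemmas, and yield the same exact constant.
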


\begin{proof}
Let $G$ be an $s \times n$ generator matrix for $\C$ with property $\cP_3$,
and let $\xxx_1,\xxx_2,\ldots,\xxx_n$ denote the columns of $G$. 
By definition, for each $i \,{\in}\, [s]$, there exist $3$ disjoint subsets
of $\{\xxx_1,\xxx_2,\ldots,\xxx_n\}$ that add up to $\eee_i$. Let
$R_1,R_2,R_3 \subset [n]$ denote the corresponding sets of indices.
Then we can write
\be{thm1-a}
\eee_i 
\; = \ 
%\textstyle
{\displaystyle\sum}_{j \in R_1} \xxx_j 
\ = \
{\displaystyle\sum}_{j \in R_2} \xxx_j 
\ = \
{\displaystyle\sum}_{j \in R_3} \xxx_j 
%R \ = \ \eee_i 
\vspace{0.25ex}
\ee

It is easy to see from \Dref{PIRmatrix} that $G$ has full column rank.
Hence some $s$ columns of $G$ are linearly independent, and we assume
w.l.o.g.\ that these are the first $s$ columns. Consequently, there
exists a nonsingular $s \times s$ matrix $A$ such that 
\be{G'}
G' 
\ \ \deff \:\
AG 
\ = \
\bigl[\,I_s\,|\,P\,\bigr]
\ee
where $I_s$ is the $s \times s$ identity matrix and $P$ is an
$s \times r$ matrix. Let $\xxx'_1,\xxx'_2,\ldots,\xxx'_n$ denote 
the columns of $G'$, with $\xxx'_j = \eee_j$ for $j = 1,2,\ldots,s$.
Then it follows from \eq{thm1-a} that
\be{thm1-b}
\aaa_i 
\; = \ 
%\textstyle
{\displaystyle\sum}_{j \in R_1} \xxx'_j 
\ = \
{\displaystyle\sum}_{j \in R_2} \xxx'_j 
\ = \
{\displaystyle\sum}_{j \in R_3} \xxx'_j 
%R \ = \ \eee_i 
\vspace{0.25ex}
\ee
where $\aaa_1,\aaa_2,\ldots,\aaa_s$ are the columns of $A$.
Note that $\dim \Span{\aaa_1,\aaa_2,\ldots,\aaa_s} = s$,
since the matrix $A$ is nonsingular. Let us now further 
define\vspace{0.75ex}
\begin{align}
\label{S-def}
S_1 &\,=\, R_1 \cap [s], &
~
S_2 &\,=\, R_2 \cap [s], &
~
S_3 &\,=\, R_3 \cap [s]
\\[0.75ex]
\label{T-def}
T_1 &\,=\, R_1 \cap \bigl([n]{\setminus}[s]\bigr), &
~
T_2 &\,=\, R_2 \cap \bigl([n]{\setminus}[s]\bigr), &
~
T_3 &\,=\, R_3 \cap \bigl([n]{\setminus}[s]\bigr)
\\[0.75ex]
\label{v-def}
\vvv_1 &\,=\, \sum_{j\in S_1}\!\xxx'_j \ = \sum_{j\in S_1}\!\eee_j ~~~~&
\vvv_2 &\,=\, \sum_{j\in S_2}\!\xxx'_j \ = \sum_{j\in S_2}\!\eee_j ~~~~&
\vvv_3 &\,=\, \sum_{j\in S_3}\!\xxx'_j \ = \sum_{j\in S_3}\!\eee_j \hspace*{5.00ex}
\\[-2.50ex]
\nonumber
\end{align}
With this notation, we can rewrite \eq{thm1-b} as follows:\vspace{1.00ex}
\be{thm1-c}
\aaa_i + \vvv_1
\ = \ 
{\displaystyle\sum}_{j \in T_1} \xxx'_j 
\hspace{8.00ex}
\aaa_i + \vvv_2
\ = \
{\displaystyle\sum}_{j \in T_2} \xxx'_j 
\hspace{8.00ex}
\aaa_i + \vvv_3
\ = \
{\displaystyle\sum}_{j \in T_3} \xxx'_j 
\hspace*{4.00ex}
\ee
%R As our last bit of notation, 
Finally, let us define 
$X \,\ \deff\ \bigl\{\xxx'_{s+1},\xxx'_{s+2},\ldots,\xxx'_{n}\bigr\}$.
Then it follows from~\eq{thm1-c}~that %R the vectors
$\aaa_i + \vvv_1$, $\aaa_i + \vvv_2$,~and $\aaa_i + \vvv_3$
belong to $\Span{X}$. 
We are now %R finally 
ready to use Lemmas \ref{L1} and \ref{L3} in order 
to complete the proof.

Since the sets $S_1,S_2,S_3$ are disjoint, it follows from
\eq{v-def} that the supports of $\vvv_1,\vvv_2,\vvv_3$
are also disjoint. In other words, 
$
\vvv_1 \vvv_2 = \vvv_1 \vvv_3 = \vvv_2 \vvv_3 = \zero
$.
Using \Lref{L3}, we conclude that
\begin{eqnarray*}
\aaa_i
& = &
\hspace*{-0.25ex}(\aaa_i + \vvv_1)(\aaa_i + \vvv_2)
\ + \ 
(\aaa_i + \vvv_2)(\aaa_i + \vvv_3)
\ + \ 
(\aaa_i + \vvv_3)(\aaa_i + \vvv_1)
\\
& = &\hspace*{-0.75ex}%
\left(\sum_{j \in T_1} \xxx'_j\right)\!\left(\sum_{j \in T_2} \xxx'_j\right)
\ + \
\left(\sum_{j \in T_2} \xxx'_j\right)\!\left(\sum_{j \in T_3} \xxx'_j\right)
\ + \
\left(\sum_{j \in T_3} \xxx'_j\right)\!\left(\sum_{j \in T_1} \xxx'_j\right)
\\[1.00ex]
& = &\hspace*{-0.50ex}%
\sum_{j \in T_1}\sum_{k \in T_2} \xxx'_j \xxx'_k\:
\ + \
\sum_{j \in T_2}\sum_{k \in T_3} \xxx'_j \xxx'_k\:
\ + \
\sum_{j \in T_3}\sum_{k \in T_1} \xxx'_j \xxx'_k
\end{eqnarray*}
Since the sets $T_1,T_2,T_3$ are disjoint subsets of
$[n]{\setminus}[s]$, all of the products $\xxx'_j \xxx'_k$
above belong to $X^2$. Consequently, it follows that
\smash{$\aaa_i \in \Span{X^2}$} for all $i$. Hence
$$
\dim {\textstyle\Span{X^2} }
\ \ge \ 
\dim \Span{\aaa_1,\aaa_2,\ldots,\aaa_s} 
\ = \
s
$$
But %R since $X^2$ generates $\Span{X^2}$, we have
$
\dim {\textstyle\Span{X^2}}
\le
|X^2|
\le
r(r-1)/2
$,
where we have used \Lref{L1}.
%R This shows that 
Thus %R Hence
$r(r-1)/2 \ge s$,~which completes the
proof of the theorem.
\end{proof}

\vspace{1.00ex}
It is shown in~\cite{FVY15a,FVY15b}
that the redundancy of $k$-server PIR codes is non-decreasing
in $k$. That is,~if $\rho(s,k)$ denotes the lowest possible
redundancy of a $k$-server PIR code of dimension $s$,
then
$$
\rho(s,k+1) 
\ \ge \
\rho(s,k) 
\hspace{7ex}
\text{for all $s \ge 1$ and all $k \ge 2$}
$$
Consequently, the lower bound of \Tref{main} trivially
extends from $3$-server PIR codes to general $k$-server PIR codes
with $k \ge 3$.

\looseness=-1
The following simple construction achieves the
lower bound of \Tref{main} for $k = 3$. Let $r$ be the smallest 
integer such that $\binom{r}{2} \ge s$. Take 
$G = \bigl[\,I_s\,|\,P\,\bigr]$, 
where $P$ is an $s \times r$ matrix whose rows are distinct
binary vectors of weight $2$. Clearly, the rows of $P$ form
a constant-weight binary code with %R minimum 
distance $2$.
By the results of~\cite{FVY15a,FVY15b}, this implies that
$G$ is a $3$-server PIR matrix, and therefore
\be{rho3}
\rho(s,3) 
\ = \
\text{the smallest integer $r$ such that\kern1pt\ $r(r\,{-}\,1) \ge 2s$}
\ = \
\left\lceil \sqrt{2s + \sfourth} ~\,+\: \shalf \right\rceil
\ee
It is also shown in~\cite{FVY15a,FVY15b} that for all even $k$,
we have $\rho(s,k) = \rho(s,k{-}1) + 1$. Consequently, \eq{rho3}
determines the lowest possible redundancy of $4$-server PIR codes
as well.
\vspace{9ex}

\end{document}